\RequirePackage{fix-cm}
\documentclass[onecolumn,draft]{svjour3}                    
\smartqed  
\usepackage{amsmath}
\usepackage{amsfonts,amssymb,bbm}
\usepackage{enumerate}
\usepackage{pstricks}
\usepackage{dsfont}
\usepackage{scalerel}
\usepackage{mathtools}

\newcommand{\prob}{\mathbb{P}}
\newcommand{\Ex}{\mathbb{E}}
\newcommand{\Rl}{\mathbb{R}}

\newcommand{\gen}{\mathcal{L}}
\newcommand{\process}{\{\eta_t\}_{t\ge 0}}
\newcommand{\jump}{\{\zeta_i\}_{i\ge 0}}

\spnewtheorem{assumption}{Assumption}{\bf}{\it}
\spnewtheorem*{nntheorem}{Theorem}{\bf}{\it}
\spnewtheorem*{nnproposition}{Proposition}{\bf}{\it}



\journalname{Journal of Statistical Physics}

\begin{document}

\title{On the Mean Residence Time in Stochastic Lattice-Gas Models}


\author{Marco Zamparo \and Luca Dall'Asta \and Andrea Gamba}


\institute{Marco Zamparo \at \email{marco.zamparo@polito.it}\\ 
  Dipartimento di Scienza Applicata e Tecnologia, Politecnico di Torino, Torino, Italy\\
  Italian Institute for Genomic Medicine, Torino, Italy
           \and
           Luca Dall'Asta \at \email{luca.dallasta@polito.it}\\
           Dipartimento di Scienza Applicata e Tecnologia, Politecnico di Torino, Torino, Italy\\
           Collegio Carlo Alberto, Universit\`a degli Studi di Torino, Torino, Italy\\
           Italian Institute for Genomic Medicine, Torino, Italy
           \and
           Andrea Gamba \at \email{andrea.gamba@polito.it}\\
           Dipartimento di Scienza Applicata e Tecnologia, Politecnico di Torino, Torino, Italy\\
           Istituto Nazionale di Fisica Nucleare, Sezione di Torino, Torino, Italy\\
           Italian Institute for Genomic Medicine, Torino, Italy
}

\date{}

\maketitle

\begin{abstract}

A heuristic law widely used in fluid dynamics for steady flows states
that the amount of a fluid in a control volume is the product of the
fluid influx and the mean time that the particles of the fluid spend
in the volume, or mean residence time. We rigorously prove that if the
mean residence time is introduced in terms of sample-path averages,
then stochastic lattice-gas models with general injection, diffusion,
and extraction dynamics verify this law. Only mild assumptions are
needed in order to make the particles distinguishable so that their
residence time can be unambiguously defined.  We use our general
result to obtain explicit expressions of the mean residence time for
the Ising model on a ring with Glauber + Kawasaki dynamics and for the
totally asymmetric simple exclusion process with open boundaries.

\keywords{Residence time \and Interacting particle systems \and
  Sample-path averages \and Strong law of large numbers}

\subclass{60F15 \and 60J27 \and 60K35 \and 82C20 \and 82C22}

\end{abstract}

\section{Introduction}

{\it Residence time} is the amount of time that the particles of a
fluid spend in a control volume. Residence time is a ubiquitous
concept involved for instance in the water cycle in hydrology
\cite{Ent}, in the water and wastewater treatment in environmental
engineering \cite{Sincero}, in continuous flow reactions in chemistry
\cite{Nauman}, and in drug kinetics in pharmaceutics \cite{Weiss}.
Beyond fluid dynamics and flow chemistry, the concept of residence
time can be applied to the flow of generic resources from biology up
to economic and social sciences.  Recently, we have used the mean
residence time of proteins on lipid membranes in eukaryotic cells to
evaluate the efficiency of the molecular sorting process
\cite{Zamparo}, whereby specific proteins and lipids are concentrated
and distilled into lipid vesicles.

The mean residence time $\tau$ of a fluid in a fixed control volume is
commonly determined for steady flows through the law $\rho=\phi\tau$
\cite{Ent,Sincero,Nauman}, where $\rho$ is the total amount of fluid
in the volume and $\phi$ is the fluid influx. This law has been
justified on the basis of heuristic arguments, but it has never been
formally expressed and rigorously proven in a microscopic framework
accounting for single fluid particles. A companion principle was
proposed in queuing theory but, unlike the case of fluid dynamics, it
was formulated and demonstrated in a rigorous setting based on
sample-path averages of stochastic queuing processes.  This principle,
which is widely known as {\it Little's law}, states that $l=\lambda w$
\cite{Little}, where $l$ is the mean number of units in the system,
$\lambda$ is their arrival rate, and $w$ is the mean time spent by a
unit in the system.  In this work, we resort to a similar sample-path
formulation to show that the law $\rho=\phi\tau$ for fluids is
rigorously verified in the microscopic framework of stochastic
lattice-gas models with rather general mechanisms of injection,
diffusion, and extraction of particles.  Precisely, $\rho$ and $\phi$
are here the mean number of particles in the system and the influx of
particles in the stationary state.  Mild conditions making particles
distinguishable and trackable must be imposed in order to
unambiguously define their residence time.

Stochastic lattice-gas models are continuous-time Markov processes
describing systems of particles moving in a lattice and interacting
with each other. Since Spitzer's pioneering studies in the late 1960's
on spatially distributed stochastic systems \cite{Griffeath},
stochastic lattice-gas models have become a main subject of research
both in physics, for the deep insight they provide on non-equilibrium
statistical mechanics \cite{Kipnis,Jona}, and in mathematics, for the
new problems they pose in probability theory
\cite{Liggett}. Nevertheless, although issues of existence and
uniqueness have long been settled \cite{Liggett}, proving anything
nontrivial about the properties of such models is surprisingly
difficult. With a few exceptions \cite{Derrida,Schutz}, explicit
calculations are not feasible and one has to be satisfied with Monte
Carlo simulations, qualitative statements based on mean-field
theories, and some explicit bounds.  Complicating the situation is the
fact that the variety of non-equilibrium phenomena one can conceive,
combined with the major role played by the details of the microscopic
dynamics, makes it arduous to define general classes of systems for
which a unified analysis is possible \cite{Jona}. For comparison, the
macroscopic behavior of systems at thermodynamic equilibrium is to a
considerable extent independent of the microscopic details, so that
different systems exhibit qualitatively the same phenomenology at
large scales. In this scenario, our proof of the universal law
$\rho=\phi\tau$ is a breakthrough, in that it provides an exact
connection between distinct dynamical observables in stochastic
lattice-gas models.  It is worth observing here that knowledge of
exact relations is precious in checking the validity of general
polynomial-time approximation schemes, such as mean-field theories
where correlations are neglected.

In order to demonstrate the practical usefulness of the law
$\rho=\phi\tau$, we compute the mean residence time for two well-known
stochastic lattice-gas models: the Ising model on a ring with Glauber
+ Kawasaki dynamics and the totally asymmetric simple exclusion
process with open boundaries. The Ising model is proposed as an
example of a system that is time-reversible at equilibrium, whereas
the totally asymmetric simple exclusion process violates time-reversal
symmetry.  The mean residence time in stochastic lattice-gas models
has been the subject of two recent works, which however ignore, and
therefore do not take advantage of, the exact law $\rho=\phi\tau$.
The first work \cite{Emilio} deals with the mean residence time of
particles undergoing an asymmetric simple exclusion dynamics on a
two-dimensional vertical strip whose top and bottom sides are in
contact with infinite particle reservoirs. In that work, the mean
residence time is approximated numerically and analytically by means
of a mean-field theory and of an analogy with a first-passage-time
problem for a birth-and-death process.  The second work
\cite{Messelink} focuses on the totally asymmetric simple exclusion
process with open boundaries and some of its variants.  In that work,
the {\it on-site} mean residence time, defined as the mean time a
particle spends on a given site before moving on to the next site, is
approximated analytically using a mean-field theory and domain-wall
theory at the coexistence of the low-density phase with the
high-density phase. 
In the case of the standard totally asymmetric simple exclusion
process, Ref.~\cite{Messelink} provides an approximate analytical
expression for the mean residence time, which can be easily computed
as the sum of on-site mean residence times over the entire lattice.
Comparison with Monte Carlo simulations suggests that this approximate
expression is exact in the large system-size limit
\cite{Messelink}. Here we show how the law $\rho=\phi\tau$ applied to
the totally asymmetric simple exclusion process allows to
compute the mean residence time exactly for any system size, obtaining
a result that is perfectly consistent with the findings of
Ref.\ \cite{Messelink} when the system size is sent to infinity.

The paper is organized as follows. In Sect.\ \ref{sec:model} we
introduce the class of stochastic lattice-gas models on which the work
is focused.  Sect.\ \ref{sec:result} is devoted to define the mean
residence time for such stochastic lattice-gas models in terms of
sample-path averages and to state the law $\rho=\phi\tau$ as a limit
theorem for these sample-path averages.  In
Sect.\ \ref{sec:applications} we apply this law to the Ising model on
a ring with Glauber + Kawasaki dynamics and to the totally asymmetric
simple exclusion process with open boundaries. Finally,
Sect.\ \ref{sec:proof} addresses the proof of the law.

\subsection{The Stochastic Lattice-Gas Model}
\label{sec:model}

Let $\Lambda$ be a finite set and let $S$ be the collection of all
functions $\eta:\Lambda\to\{0,1\}$. We will refer to the set $\Lambda$
as the {\it lattice}, to $x\in\Lambda$ as a {\it site} of the lattice,
and to $\eta\in S$ as a {\it microscopic configuration} of the
lattice. Given a site $x\in\Lambda$ and a microscopic configuration
$\eta\in S$, the binary number $\eta(x)$ will be interpreted as the
number of particles of a fluid at $x$ once an exclusion principle is
imposed.  The stochastic lattice-gas model we consider is a
homogeneous continuous-time Markov chain $\process$ defined on some
probability space $(\Omega,\mathcal{F},\prob)$ with {\it state space}
$S$ and right-continuous sample paths.  The microscopic dynamics is
specified by an {\it infinitesimal generator} $\gen$ (see
\cite{Norris}, page 94) that acts on any observable $f:S\to\Rl$
providing its evolution in time in the sense that
\begin{equation*}
\frac{d}{dt}\Ex\big[f(\eta_t)\big]=\Ex\big[\gen f(\eta_t)\big],
\end{equation*}
where $\Ex$ denotes expectation with respect to the probability
measure $\prob$. We write the infinitesimal generator as the
superposition $\gen:=\gen_I+\gen_D+\gen_E$ of a generator $\gen_I$
accounting for injection of particles in the lattice, a generator
$\gen_D$ describing diffusion on the system, and a generator $\gen_E$
governing extraction of particles from the system. Our prescriptions
for the generators $\gen_I$, $\gen_D$, and $\gen_E$ are provided
below. Hereafter, given a state $\eta\in S$ and a subset
$v\subseteq\Lambda$, we denote by $\eta^v$ the microscopic
configuration defined by $\eta^v(x):=1-\eta(x)$ if $x\in v$ and
$\eta^v(x):=\eta(x)$ otherwise.
\begin{assumption}
When the system is in the state $\eta\in S$, then 
\begin{itemize}
\item[\upshape{(I)}] a particle can be injected at a site
  $x\in\Lambda$ with {\upshape injection rate} $i_x(\eta)\ge 0$ if $x$
  is not occupied. Thus, $i_x(\eta)=0$ if $\eta(x)=1$. The action of
  the generator $\gen_I$ on the observable $f$ reads
\begin{equation*}
\gen_If(\eta):=\sum_{x\in\Lambda}i_x(\eta)\big[f(\eta^{\{x\}})-f(\eta)\big];
\end{equation*}
\item[\upshape{(D)}] a particle occupying a site $x$ can diffuse on the system
  through a jump on an empty site $y$ with {\upshape diffusion rate}
  $d_{x,y}(\eta)\ge 0$. Thus, $d_{x,y}(\eta)=0$ if either $\eta(x)=0$
  or $\eta(y)=1$. The action of $\gen_D$ on $f$ is
\begin{equation*}
\gen_D f(\eta):=\sum_{(x,y)\in\Lambda^2}d_{x,y}(\eta)\big[f(\eta^{\{x,y\}})-f(\eta)\big];
\end{equation*}
\item[\upshape{(E)}] all particles in an arbitrary subset
  $v\subseteq\Lambda$ are simultaneously removed from the system with
  {\upshape extraction rate} $e_v(\eta)\ge 0$ if $v$ is completely
  filled. Thus, $e_v(\eta)=0$ if there exists $x\in v$ such that
  $\eta(x)=0$. The action of $\gen_E$ on $f$ reads
\begin{equation}
\nonumber 
\gen_Ef(\eta):=\sum_{v\subseteq\Lambda}e_v(\eta)\big[f(\eta^v)-f(\eta)\big].
\end{equation}
\end{itemize}
\end{assumption}
Some remarks are in order. The injection mechanism described by
$\gen_I$ entails that particles are injected in the system one at a
time. This hypothesis is not necessary to distinguish particles but
largely simplifies the presentation, covering at the same time most of
the interesting physical systems and basically all models found in the
literature. The diffusion mechanism identified by $\gen_D$ accounts
for only one particle jump at a time. This hypothesis is necessary to
distinguish and to track particles, which are unavoidable operations
when one needs to link the particles that leave the system with the
particles that have previously entered in order to define residence
times. Finally, it is worth observing here that the generator $\gen_E$
allows the extraction of any possible subset of the lattice, thus
providing in principle the most general extraction mechanism.

The process $\process$ can be conveniently represented in terms of the
associated jump chain and holding times. It will be important from now
on not to confuse jumps of the system between microscopic
configurations with jumps of the particles on the lattice. Let
$J_0:=0$ and $J_i:=\inf\{t>J_{i-1}:\eta_t\ne \eta_{J_{i-1}}\}$ for
each $i\ge 1$ be the {\it jump times} at which the system moves to a
new state.  We point out that $\lim_{i\uparrow\infty}J_i=\infty$
$\prob$-a.s.\footnote{As usual, we say that a property holds
  $\prob$-{\it almost surely} ($\prob$-a.s.\ for short) if it holds
  for all $\omega\in\Omega_o\in\mathcal{F}$ with $\prob[\Omega_o]=1$.}
because the state space $S$ is finite (see \cite{Norris}, page 90).
Set $\zeta_i:=\eta_{J_i}$ for all $i\ge 0$, so that $\eta_t=\zeta_i$
whenever $t$ satisfies $J_i\le t <J_{i+1}$. The sequence $\jump$
collecting the states that the system progressively visits is called
the {\it jump chain} and results in a homogeneous discrete-time Markov
chain (see \cite{Norris}, page 88). Denote by $q(\eta)$ the real
number defined for each $\eta\in S$ by
\begin{equation}
q(\eta):=\sum_{x\in\Lambda}i_x(\eta)+\sum_{(x,y)\in\Lambda^2}d_{x,y}(\eta)+\sum_{v\subseteq\Lambda}e_v(\eta).
\label{qdef}
\end{equation}
Transition probabilities of the jump chain are given for each $\eta$
and $\eta'$ in $S$ by the formula
$\prob[\zeta_{i+1}=\eta'|\zeta_i=\eta]=\mathds{1}(\eta'=\eta)$ or the
formula
\begin{eqnarray}
\nonumber
\prob\big[\zeta_{i+1}=\eta'\big|\zeta_i=\eta\big]&=&\sum_{x\in\Lambda}\frac{i_x(\eta)}{q(\eta)}\,\mathds{1}(\eta'=\eta^{\{x\}})\\
\nonumber
&+&\sum_{(x,y)\in\Lambda^2}\frac{d_{x,y}(\eta)}{q(\eta)}\,\mathds{1}(\eta'=\eta^{\{x,y\}})\\
&+&\sum_{v\subseteq\Lambda}\frac{e_v(\eta)}{q(\eta)}\,\mathds{1}(\eta'=\eta^v)
\label{probjump}
\end{eqnarray}
depending on whether $q(\eta)=0$ or $q(\eta)>0$ (see \cite{Norris},
page 87). The time $H_i$ that the process $\process$ spends in the
state $\zeta_i$ is $H_i:=J_{i+1}-J_i$ and is called {\it holding
  time}.  For each $i\ge 0$, conditional on $\zeta_0,\ldots,\zeta_i$,
the holding times $H_0,\ldots,H_i$ are independent exponential random
variables of parameters $q(\zeta_0),\ldots,q(\zeta_i)$ respectively
(see \cite{Norris}, page 88).

The process $\process$ is said to be {\it irreducible} if for each
microscopic configurations $\eta$ and $\eta'$ there exists an integer
$i\ge 0$ such that $\prob\big[\zeta_i=\eta'|\zeta_0=\eta]>0$.  Most of
the physical phenomena that can be described in terms of stochastic
lattice-gas models originate processes that do not become trapped in
proper subsets of the state space, thus resulting irreducible
\cite{Jona}.  Irreducibility is assumed here.
\begin{assumption}
The process $\process$ is irreducible.
\end{assumption}
Irreducibility combined with the fact that the state space $S$ is
finite due to the finiteness of $\Lambda$ has a number of
consequences. First of all, no state is {\it absorbing}, meaning that
$q(\eta)>0$ for all $\eta\in S$. This gives in particular that the
jump chain $\jump$ cannot stay at rest, satisfying for each $i\ge 1$
one of the following alternatives $\prob$-a.s.:
\begin{itemize}
\item[(I)] there exists a site $x\in\Lambda$ such that
  $\zeta_{i-1}(x)=0$ and $\zeta_i=\zeta_{i-1}^{\{x\}}$;
\item[(D)] there exist $x$ and $y$ in $\Lambda$ such that $\zeta_{i-1}(x)=1$,
  $\zeta_{i-1}(y)=0$, and $\zeta_i=\zeta_{i-1}^{\{x,y\}}$; \\[-0.85em]
\item[(E)] there exists a cluster $v\subseteq\Lambda$ such that
  $\zeta_{i-1}(x)=1$ for all $x\in v$ and
  $\zeta_i=\zeta_{i-1}^v$.
\end{itemize}
Secondly, the jump chain is {\it recurrent} (see \cite{Norris}, page
27), so that for every $\eta\in S$ there exist
$\prob$-a.s.\ infinitely many $i$ with the property that
$\zeta_i=\eta$. Third, there exists a unique invariant distribution
$\pi$ (see \cite{Norris}, page 118).  We recall that a distribution
$\pi$ on $S$ is {\it invariant} if $\sum_{\eta\in S}\gen
f(\eta)\pi(\eta)=0$ for all observables $f:S\to\Rl$. Last, strong laws
of large numbers hold for functionals of the process $\process$ (see
\cite{Norris}, page 126) and the jump chain $\jump$ (see
\cite{Serfozo}, page 267).

\subsection{The Mean Residence Time Law}
\label{sec:result}

Let $\theta_i(x,y)$ be the binary random variable defined for each
integer $i\ge 1$ and sites $x$ and $y$ in $\Lambda$ by
\begin{equation*}
  \theta_i(x,y):=\begin{cases}
  \zeta_{i-1}(x)\zeta_i(x) & \mbox{if }x=y;\\
  \zeta_{i-1}(x)[1-\zeta_{i-1}(y)][1-\zeta_i(x)]\zeta_i(y) & \mbox{if }x\ne y,
\end{cases}
\end{equation*}
where $\jump$ is the jump chain.  Considering separately the
alternatives (I), (D), and (E) for the $i$th configuration jump it is
not difficult to verify that $\theta_i(x,x)=1$ if and only if $x$
hosts a particle that stays at rest during this change of
configuration. Similarly, $\theta_i(x,y)=1$ with $x\ne y$ if and only
if there is a particle at $x$ that moves to $y$ in the $i$th
configuration jump. Tracking particles on the lattice is now possible.
Given a couple of integers $j\ge i\ge 1$ and sites
$x_{i-1},\ldots,x_j$ in $\Lambda$ not necessarily distinct, we have
that $x_{i-1}$ hosts a particle that moves progressively to $x_k$ in
the $k$th configuration jump with $k$ running from $i$ to $j$ if and
only if $\prod_{k=i}^j\theta_k(x_{k-1},x_k)=1$. It follows in
particular that a particle located at $x_{i-1}$ before the $i$th
configuration jump is still in the system after the $j$th change of
configuration if and only if there exist $x_i,\ldots,x_j$ in $\Lambda$
such that $\prod_{k=i}^j\theta_k(x_{k-1},x_k)=1$. This condition is
tantamount to $\Theta_{i,j}(x_{i-1})=1$, where $\Theta_{i,j}(x_{i-1})$
is the binary random variable defined for each $j\ge i\ge 1$ and
$x_{i-1}\in\Lambda$ by
\begin{equation}
\Theta_{i,j}(x_{i-1}):=\sum_{x_i\in\Lambda}\cdots\sum_{x_j\in\Lambda}\prod_{k=i}^j\theta_k(x_{k-1},x_k).
\label{Thetadef}
\end{equation}

Let $|\eta|:=\sum_{x\in\Lambda}\eta(x)$ denote the number of particles
in the system under the state $\eta\in S$.  Given an integer $i\ge 1$,
a particle enters the system in the $i$th configuration jump if and
only if the condition $|\zeta_i|>|\zeta_{i-1}|$ that excludes
alternatives (D) and (E) is fulfilled.  If $|\zeta_i|>|\zeta_{i-1}|$,
then the particle is injected at that unique site $x$ such that
$\zeta_{i-1}(x)=0$ and $\zeta_i(x)=1$. This way, for each $j\ge i$ we
can state that a particle enters the system in the $i$th configuration
jump and is still in the system after the $j$th change of
configuration if and only if $U_{i,j}=1$, where $U_{i,j}$ is the
binary random variable defined by
\begin{equation}
U_{i,j}:=
\begin{cases}
\mathds{1}(|\zeta_i|>|\zeta_{i-1}|) & \mbox{if }j=i; \\
\mathds{1}(|\zeta_i|>|\zeta_{i-1}|)\sum_{x\in\Lambda}[1-\zeta_{i-1}(x)]\Theta_{i+1,j}(x) & \mbox{if }j>i.
\end{cases}
\label{Udef}
\end{equation}
For any $j>i\ge 1$, we have that $U_{i,j-1}\ge U_{i,j}$ and that a
particle enters the system in the $i$th configuration jump and leaves
it exactly in the $j$th change of configuration if and only if
$U_{i,j-1}-U_{i,j}=1$.  The random variables (\ref{Thetadef}) and
(\ref{Udef}) satisfy $\lim_{j\uparrow\infty}\Theta_{i,j}(x)=0$
$\prob$-a.s.\ and $\lim_{j\uparrow\infty}U_{i,j}=0$ $\prob$-a.s. for
any $i\ge 1$ and $x\in\Lambda$, meaning that every particle eventually
leaves the system. The simplest way to prove these limits is to
observe that recurrence of the jump chain implies that there exist
$\prob$-a.s.\ infinitely many $k$ such that $\zeta_k(x)=0$ for all
$x$. For such $k$ it holds that $\theta_k(x,y)=0$ for all $x$ and $y$
in $\Lambda$.

We are now able to define the mean residence time in terms of
sample-path averages. A particle that enters the system in the $i$th
configuration jump and leaves it exactly in the $j$th change of
configuration spends in the lattice the time $J_j-J_i$. Thus, denoting
by $N_t:=\sup\{i\ge 0:J_i\le t\}$ the number of configuration jumps up
to a certain time $t\ge 0$, we introduce the mean residence time $T_t$
of the particles that have been injected by the time $t$ as
\begin{equation}
  T_t:=\frac{\sum_{i=1}^{N_t}\sum_{j=i+1}^\infty (J_j-J_i)(U_{i,j-1}-U_{i,j})}{\sum_{i=1}^{N_t}\sum_{j=i+1}^\infty (U_{i,j-1}-U_{i,j})}.
\label{MRT}
\end{equation}
Hereafter we assume that a sum with upper limit smaller than the lower
one is equal to zero and that $0/0:=0$. The following theorem stating
the law $\rho=\phi\tau$ for stochastic lattice-gas models is our main
result.
\begin{nntheorem}
  \label{main1}
  Let $\pi$ be the invariant distribution of $\gen$ and set
  $\rho:=\sum_{\eta\in S}|\eta|\pi(\eta)$ and $\phi:=\sum_{\eta\in
    S}\sum_{x\in\Lambda}i_x(\eta)\pi(\eta)$. Then, the limit
  $\lim_{t\uparrow\infty} T_t=:\tau$ exists $\prob$-a.s.\ and
  satisfies $\rho=\phi\tau$.
\end{nntheorem}

The real number $\rho$ is the mean number of particles in the system
with respect to $\pi$.  The real number $\phi$ is the rate at which
particles enter the system measured as follows. The number of
particles that are injected in the lattice by the time $t\ge 0$ is
$\sum_{i=1}^{N_t}U_{i,i}=\sum_{i=1}^{N_t}\mathds{1}(|\zeta_i|>|\zeta_{i-1}|)$.
Thus, appealing to the strong law of large numbers for functionals of
the jump chain (see \cite{Serfozo}, page 267) first and to the
explicit expression (\ref{probjump}) of its transition probabilities
later we get
\begin{eqnarray}
\nonumber
\lim_{t\uparrow\infty}\,\frac{1}{t}\sum_{i=1}^{N_t}U_{i,i}
&=&\sum_{\eta\in S}\sum_{\eta'\in S}\mathds{1}(|\eta'|>|\eta|)\,\prob\big[\zeta_1=\eta'\bigl|\zeta_0=\eta\big]\,
q(\eta)\,\pi(\eta)~~~~~~~\prob\mbox{-a.s.}\\
&=&\sum_{\eta\in S}\sum_{x\in\Lambda}i_x(\eta)\pi(\eta)=\phi.
\label{phi_rate}
\end{eqnarray}

\subsection{Applications}
\label{sec:applications}

In this section we make use of the law $\rho=\phi\tau$ to compute
explicitly the mean residence time for two well-known stochastic
lattice-gas models. The first model is the Ising model on a ring with
Glauber + Kawasaki dynamics, which is proposed as an example of a
system that is time-reversible at equilibrium. The second model is the
totally asymmetric simple exclusion process with open boundaries,
which violates time-reversal symmetry. We recall that the process
$\process$ is said to be {\it time-reversible} if $\{\eta_t\}_{0\le
  t\le T}$ and $\{\eta_{T-t}\}_{0\le t\le T}$ share the same
finite-dimensional marginal distributions for any number $T>0$.  The
irreducible homogeneous continuous-time Markov chain $\process$ with
invariant distribution $\pi$ is time-reversible if and only if
$\prob[\eta_0=\eta]=\pi(\eta)$ for all $\eta\in S$, so that $\process$
is stationary, and the infinitesimal generator $\gen$ satisfies
detailed balance with respect to $\pi$ (see \cite{Norris}, page
125). The generator $\gen$ is said to satisfy {\it detailed balance}
with respect to a probability distribution $\lambda$ on $S$ if
$\sum_{\eta\in S}(f\gen g-g\gen f)(\eta)\lambda(\eta)=0$ for every two
observables $f:S\to\Rl$ and $g:S\to\Rl$. The distribution $\lambda$ is
invariant if $\gen$ satisfies detailed balance with respect to
$\lambda$ (see \cite{Norris}, page 125).

Conditions on the rates providing reversibility can be easily obtained
as follows.  Assume that $\gen$ satisfies detailed balance with
respect to $\pi$ and observe that irreducibility and finiteness of the
state space entail $\pi(\eta)>0$ for all $\eta\in S$ (see
\cite{Norris}, page 118). Given a site $\bar{x}$ and a state
$\bar{\eta}$ such that $\bar{\eta}(\bar{x})=1$, the condition
$\sum_{\eta\in S}(f\gen g-g\gen f)(\eta)\lambda(\eta)=0$ results in
$e_{\{\bar{x}\}}(\bar{\eta})\pi(\bar{\eta})-
i_{\bar{x}}(\bar{\eta}^{\{\bar{x}\}})\pi(\bar{\eta}^{\{\bar{x}\}})=0$
when $f(\eta):=\mathds{1}(\eta=\bar{\eta})$ and
$g(\eta):=\mathds{1}(\eta=\bar{\eta}^{\{\bar{x}\}})$ for all $\eta$.
The arbitrariness of $\bar{x}$ and $\bar{\eta}$, combined with the
fact that $e_{\{\bar{x}\}}(\eta)=i_{\bar{x}}(\eta^{\{\bar{x}\}})=0$ by
construction if $\eta(\bar{x})=0$, yields $e_{\{x\}}(\eta)\pi(\eta)=
i_{x}(\eta^{\{x\}})\pi(\eta^{\{x\}})$ for every $x\in\Lambda$ and
$\eta\in S$. Furthermore, if $\bar{v}$ is a set of at least two sites
and $\bar{\eta}$ is a microscopic configuration such that
$\bar{\eta}(x)=1$ for each $x\in\bar{v}$, then the choice
$f(\eta):=\mathds{1}(\eta=\bar{\eta})$ and
$g(\eta):=\mathds{1}(\eta=\bar{\eta}^{\bar{v}})$ for any $\eta$ shows
that $0=\sum_{\eta\in S}(f\gen g-g\gen
f)(\eta)\pi(\eta)=e_{\bar{v}}(\bar{\eta})\pi(\bar{\eta})$.  The
arbitrariness of $\bar{v}$ and $\bar{\eta}$ and the fact that
$e_{\bar{v}}(\eta)=0$ by construction if $\eta(x)=0$ for some
$x\in\bar{v}$ imply that $e_v(\eta)=0$ for all $v\subseteq\Lambda$
containing more than one site and all $\eta\in S$. In conclusion, we
find that for each $v\subseteq\Lambda$ and $\eta\in S$
\begin{equation}
  e_v(\eta)=
  \begin{cases}
    i_x(\eta^{\{x\}})\frac{\pi(\eta^{\{x\}})}{\pi(\eta)} & \mbox{if $v=\{x\}$ for some $x\in\Lambda$}; \\
    0 & \mbox{otherwise}.
  \end{cases}
\label{Gdyn}
\end{equation}
Such extraction rates make the dynamics generated by $\gen_I+\gen_E$ a
{\it Glauber dynamics} \cite{Presutti}, whereby only the update of one
site at a time is involved and the detailed balance condition is
fulfilled. Given now two distinct sites $\bar{x}$ and $\bar{y}$ and a
microscopic configuration $\bar{\eta}$ such that
$\bar{\eta}(\bar{x})=1$ and $\bar{\eta}(\bar{y})=0$, the condition
$\sum_{\eta\in S}(f\gen g-g\gen f)(\eta)\pi(\eta)=0$ with
$f(\eta):=\mathds{1}(\eta=\bar{\eta})$ and
$g(\eta):=\mathds{1}(\eta=\bar{\eta}^{\{\bar{x},\bar{y}\}})$ for all
$\eta$ becomes $d_{\bar{x},\bar{y}}(\bar{\eta})\pi(\bar{\eta})-
d_{\bar{y},\bar{x}}(\bar{\eta}^{\{\bar{x},\bar{y}\}})\pi(\bar{\eta}^{\{\bar{x},\bar{y}\}})=0$.
The arbitrariness of $\bar{x}$, $\bar{y}$, and $\bar{\eta}$, combined
with the fact that
$d_{\bar{x},\bar{y}}(\eta)=d_{\bar{y},\bar{x}}(\eta^{\{\bar{x},\bar{y}\}})=0$
by construction if either $\eta(\bar{x})=0$ or $\eta(\bar{y})=1$, leads
to the relationship
\begin{equation}
d_{x,y}(\eta)\pi(\eta)=d_{y,x}(\eta^{\{x,y\}})\pi(\eta^{\{x,y\}})
\label{Kdyn}
\end{equation}
to be satisfied for all $x$ and $y$ in $\Lambda$ and $\eta\in S$.  The
dynamics generated by $\gen_D$ is called a {\it Kawasaki dynamics}
\cite{Presutti} if the set $\Lambda$ is endowed with a graph structure
and if the diffusion rates satisfy both (\ref{Kdyn}) and the property
that $d_{x,y}(\eta)=0$ for all $\eta$ whenever $x$ and $y$ are not
nearest-neighbor sites.

Conditions (\ref{Gdyn}) and (\ref{Kdyn}) are necessary conditions for
the generator $\gen$ to satisfy detailed balance with respect to the
distribution $\pi$. Simple algebra shows that they also are sufficient
conditions to give $\sum_{\eta\in S}(f\gen g-g\gen
f)(\eta)\lambda(\eta)=0$ for all $f:S\to\Rl$ and $g:S\to\Rl$.  Thus,
$\gen$ satisfies detailed balance with respect to $\pi$ if and only if
(\ref{Gdyn}) and (\ref{Kdyn}) hold.

\subsubsection{The Ising Model on a Ring with Glauber + Kawasaki Dynamics}

Let $\Lambda:=\mathbb{Z}/L\mathbb{Z}$ be the one-dimensional discrete
torus of size $L\ge 2$ and let the function $\mathcal{H}:S\to\Rl$ be
the {\it Ising Hamiltonian} defined for each $\eta\in S$ by
\begin{equation}
\mathcal{H}(\eta):=V\sum_{x\in\Lambda}\eta(x)\eta(x+1)-\mu\sum_{x\in\Lambda}\eta(x),
\label{HIsing}
\end{equation}
where $V\in\Rl$ is the {\it interaction parameter} and $\mu\in\Rl$ is
the {\it chemical potential}. The {\it Gibbs state} associated to
$\mathcal{H}$ is the distribution
$\pi_\mathrm{G}:=(1/Z)\exp(-\mathcal{H})$, $Z$ being the {\it
  partition function}. In this section we consider a stochastic
lattice-gas model whose generator $\gen$ satisfies detailed balance
with respect to the Gibbs state $\pi_\mathrm{G}$, so that extraction
and diffusion rates fulfill the conditions (\ref{Gdyn}) and
(\ref{Kdyn}) respectively with $\pi=\pi_\mathrm{G}$. For simplicity,
we focus here on the local and translationally invariant injection
rates defined for all $x\in\Lambda$ and $\eta\in S$ by the formula
\begin{equation*}
i_x(\eta):=\big[1-\eta(x)\big]\alpha_{\eta(x-1),\eta(x+1)},
\end{equation*}
where the parameters $\alpha_{0,0}$, $\alpha_{1,0}$, $\alpha_{0,1}$,
and $\alpha_{1,1}$ are assumed to be strictly positive.  Non-vanishing
extraction rates inherit the same local and translationally invariant
structure, since combining (\ref{Gdyn}) with (\ref{HIsing}) we get
$e_{\{x\}}(\eta)=\eta(x)\beta_{\eta(x-1),\eta(x+1)}$ for any $x$ and
$\eta$ with the strictly positive parameters
$\beta_{0,0}:=\alpha_{0,0}e^{-\mu}$,
$\beta_{1,0}:=\alpha_{1,0}e^{V-\mu}$,
$\beta_{0,1}:=\alpha_{0,1}e^{V-\mu}$, and
$\beta_{1,1}:=\alpha_{1,1}e^{2V-\mu}$. Since $i_x(\eta)>0$ if
$\eta(x)=0$ and $e_{\{x\}}(\eta)>0$ if $\eta(x)=1$ for every $x$ and
$\eta$, the process $\process$ turns out to be irreducible
irrespective of the features of the generator $\gen_D$.  Although the
mean residence time does not depend on the details of diffusion rates
as long as condition (\ref{Kdyn}) holds, to fix the ideas we consider
here the Kawasaki dynamics where particles can only jump to
nearest-neighbor sites. We will refer to this model as the Ising model
on a ring with Glauber + Kawasaki dynamics.

The mean residence time $\tau$ can be computed explicitly as follows.
The translational symmetry of the invariant distribution and of the
injection rates yields $\rho=L\sum_{\eta\in S}\eta(1)\pi(\eta)$ and
$\phi=L\sum_{\eta\in S}i_1(\eta)\pi(\eta)$. Consequently, we have
\begin{equation}
  \tau=\frac{\rho}{\phi}=\frac{\sum_{\eta\in S}\eta(1)\exp[-\mathcal{H}(\eta)]}{\sum_{\eta\in S}i_1(\eta)\exp[-\mathcal{H}(\eta)]}.
\label{startIsing}
\end{equation}
The sums over $\eta$ that appear in (\ref{startIsing}) can be carried
out by means of the transfer matrix method. Let
$\mathcal{T}\in\Rl^{2\times 2}$ be the symmetric matrix with entries
$\mathcal{T}_{0,0}:=1$,
$\mathcal{T}_{1,0}=\mathcal{T}_{0,1}:=e^{\mu/2}$, and
$\mathcal{T}_{1,1}:=e^{\mu-V}$. The matrix $\mathcal{T}$ allows us
to recast the weight $\exp[-\mathcal{H}(\eta)]$ as
$\prod_{x\in\Lambda} \mathcal{T}_{\eta(x),\eta(x+1)}$ for each
$\eta\in S$. This way, we get
\begin{equation}
\sum_{\eta\in S}\eta(1)\exp\big[-\mathcal{H}(\eta)\big]=\sum_{\eta\in S}\eta(1)\prod_{x\in\Lambda}\mathcal{T}_{\eta(x),\eta(x+1)}=(\mathcal{T}^L)_{1,1}
\label{numIsing}
\end{equation}
and
\begin{eqnarray}
\nonumber
  \sum_{\eta\in S}i_1(\eta)\exp\big[-\mathcal{H}(\eta)\big]&=&
  \sum_{\eta\in S}\big[1-\eta(1)\big]\alpha_{\eta(0),\eta(2)}\prod_{x\in\Lambda}\mathcal{T}_{\eta(x),\eta(x+1)}\\
\nonumber
&=&\alpha_{0,0}(\mathcal{T}^{L-2})_{0,0}+\alpha_{1,0}e^{\mu/2}(\mathcal{T}^{L-2})_{1,0}+\\
&&\alpha_{0,1}e^{\mu/2}(\mathcal{T}^{L-2})_{0,1}+\alpha_{1,1}e^\mu(\mathcal{T}^{L-2})_{1,1}.~~~~~~~
\label{denIsing}
\end{eqnarray}
We now use the fact that $\mathcal{T}$ is symmetric to write down for
any $n\ge 0$ the spectral decomposition
$\mathcal{T}^n=t_+^n\mathcal{P}_+^{\phantom{n}}+t_-^n\mathcal{P}_-^{\phantom{n}}$,
where $t_-<t_+$ are the eigenvalues of $\mathcal{T}$ and
$\mathcal{P}_-$ and $\mathcal{P}_+$ are the orthogonal projections
onto the corresponding eigenspaces. The eigenvalues and the
projections are given by the formulas
\begin{equation*}
t_\pm=\frac{1+e^{\mu-V}\pm\sqrt{[1-e^{\mu-V}]^2+4e^\mu}}{2}
\end{equation*}
and $\mathcal{P}_\pm=(\mathcal{T}-t_\mp\mathcal{I})/(t_\pm-t_\mp)$,
$\mathcal{I}$ being the identity matrix.  Thus, combining
(\ref{startIsing}) with (\ref{numIsing}) and (\ref{denIsing}) first
and making use of this spectral decomposition later we reach the
result
\begin{equation}
\tau=\frac{r_+^{\phantom{L}}t_+^L+r_-^{\phantom{L}}t_-^L}{a_+^{\phantom{L}}t_+^{L-2}+a_-^{\phantom{L}}t_-^{L-2}},
\label{tauIsing}
\end{equation}
where
\begin{equation*}
r_\pm:=(\mathcal{P}_\pm)_{1,1}=\frac{e^{\mu-V}-t_{\mp}}{t_\pm-t_\mp}=\frac{(t_\pm-1)^2}{(t_\pm-1)^2+e^\mu}
\end{equation*}
and
\begin{eqnarray}
  \nonumber
  a_\pm&:=&\alpha_{0,0}(\mathcal{P}_\pm)_{0,0}+\alpha_{1,0}e^{\mu/2}(\mathcal{P}_\pm)_{1,0}
  +\alpha_{0,1}e^{\mu/2}(\mathcal{P}_\pm)_{0,1}+\alpha_{1,1}e^\mu(\mathcal{P}_\pm)_{1,1}\\
  \nonumber
  &=&\frac{\alpha_{0,0}(1-t_\mp)+(\alpha_{1,0}+\alpha_{0,1})e^\mu+\alpha_{1,1}[e^{\mu-V}-t_\mp]}{t_\pm-t_\mp}\\
  \nonumber
  &=&\frac{\alpha_{0,0}+(\alpha_{1,0}+\alpha_{0,1})(t_\pm-1)+\alpha_{1,1}(t_\pm-1)^2}{1+e^{-\mu}(t_\pm-1)^2}.
\end{eqnarray}
The explicit expression of the mean residence time $\tau$ for the
Ising model on a ring with Glauber + Kawasaki dynamics is thus
provided by (\ref{tauIsing}). The time $\tau$ is bounded with respect
to the system size $L$ because particles can leave the system at each
site. A different situation is observed in the totally asymmetric
simple exclusion process, where particles have to travel a macroscopic
distance before being allowed to leave the system.

\subsubsection{The Totally Asymmetric Simple Exclusion Process}

Let $\Lambda$ be the set $\{1,\ldots,L\}$ for some integer $L\ge 2$.
The totally asymmetric simple exclusion process with open boundaries
is the irreducible stochastic lattice-gas model associated with the
lattice $\Lambda$ and the following rates, where $\alpha>0$ and
$\beta>0$ are model parameters: $i_1(\eta):=\alpha[1-\eta(1)]$ and
$i_x(\eta):=0$ if $x>1$ as far as injection rates are concerned,
$d_{x,x+1}(\eta):=\eta(x)[1-\eta(x+1)]$ if $x<L$ and
$d_{x,y}(\eta):=0$ if $y\ne x+1$ for diffusion rates,
$e_{\{L\}}(\eta):=\beta\eta(L)$ and $e_v(\eta):=0$ if $v\ne\{L\}$ for
extraction rates. Thus, particles enter the lattice $\Lambda$ at the
left boundary with rate $\alpha$, can move rightwards, and leave the
system at the right boundary with rate $\beta$.  The invariant
distribution $\pi$ is known \cite{Derrida} and an explicit expression
for the probability with respect to $\pi$ that a certain site is
occupied can be obtained \cite{Derrida}. We need this expression in
order to compute the mean residence time. Let $Z_x$ be the real number
defined for each integer $x\ge 0$ by
\begin{equation*}
  Z_x:=\begin{cases}
  1 & \mbox{if }x=0;\\
  \sum_{k=1}^x B_{x,k}\sum_{l=0}^k\frac{1}{\alpha^l}\frac{1}{\beta^{k-l}} & \mbox{if }x\ge 1,
\end{cases}
  \end{equation*}
where $B_{x,k}$ is the combinatorial coefficient given for all $x\ge
1$ and $k\ge 1$ by the formula
\begin{equation*}
B_{x,k}:=\frac{k(2x-k-1)!}{x!(x-k)!}.
\end{equation*}
The probability $\sum_{\eta\in S}\eta(x)\pi(\eta)$ with respect to
$\pi$ that a generic site $x\in\Lambda$ is occupied is \cite{Derrida}
\begin{equation}
  \sum_{\eta\in S}\eta(x)\pi(\eta)=\frac{1}{Z_L}
  \begin{cases}
  \sum_{k=1}^{L-x}\big[Z_{L-k}B_{k,1}+\frac{Z_{x-1}B_{L-x,k}}{\beta^{k+1}}\big] & \mbox{if }x<L;\\
  \frac{Z_{L-1}}{\beta} & \mbox{if }x=L.
  \end{cases}
  \label{tasep}
\end{equation}

The mean residence time of the totally asymmetric simple exclusion
process with open boundaries can be immediately determined by
combining the $\rho=\phi\tau$ law with (\ref{tasep}). To get at a more
compact expression, we notice that the influx $\phi$ defined as
$\phi:=\alpha\sum_{\eta\in S}[1-\eta(1)]\pi(\eta)$ equals
$\beta\sum_{\eta\in S}\eta(L)\pi(\eta)$. Indeed, from the definition
of $\pi$ we have that $0=\sum_{\eta\in S}\gen
f(\eta)\pi(\eta)=\phi-\beta\sum_{\eta\in S}\eta(L)\pi(\eta)$ if
$f(\eta):=|\eta|$ for each $\eta\in S$. This way, we can write
\begin{eqnarray}
  \nonumber
  \tau&=&\frac{\rho}{\phi}=\frac{\sum_{x=1}^L\sum_{\eta\in S}\eta(x)\pi(\eta)}{\beta\sum_{\eta\in S}\eta(L)\pi(\eta)}\\
  &=&\frac{1}{\beta}+\frac{1}{Z_{L-1}}\sum_{x=1}^{L-1}\Bigg[(L-x)Z_{L-x}B_{x,1}+\sum_{k=1}^{L-x}\frac{Z_{x-1}B_{L-x,k}}{\beta^{k+1}}\Bigg].
\label{tauTASEP}
\end{eqnarray}
This formula provides the exact mean residence time $\tau$ for any
system size $L\ge 2$. Even though (\ref{tauTASEP}) is slightly
cumbersome to deal with, asymptotic analysis shows that $\tau$ is
proportional to $L$ in the large $L$ limit with the simple coefficient
of proportionality $r$ given by
\begin{equation*}
  r:=
  \begin{cases}
    2 & \mbox{if $\alpha\ge 1/2$ and $\beta\ge 1/2$};\\
    \frac{1}{2\alpha(1-\alpha)} & \mbox{if $\alpha=\beta<1/2$};\\
    \frac{1}{1-\alpha} & \mbox{if $\alpha<1/2$ and $\alpha<\beta$};\\
    \frac{1}{\beta} & \mbox{if $\beta<1/2$ and $\beta<\alpha$}.
  \end{cases}
\end{equation*}
Indeed, the following proposition holds, confirming that $\tau$ is
proportional to the distance $L$ that particles have to travel before
leaving the system.
\begin{nnproposition}
For each $\alpha>0$ and $\beta>0$ there exists a positive constant
$c<\infty$ independent of $L$ such that
\begin{equation*}
\bigg|\frac{\tau}{L}-r\bigg|\le \frac{c}{\sqrt{L}}.
\end{equation*}
\end{nnproposition}
The proof of this proposition goes through the asymptotic analysis of
the number $Z_x$ in the large $x$ limit, which can be performed by
means of Laplace's method for sums as in Ref.\ \cite{Derrida}. We omit
the details because they are easily imaginable and not very
informative. We point out that the coefficient $r$ has been previously
determined in Ref.\ \cite{Messelink}, where the mean time that a
particle spends on a given site before moving on to the next site has
been investigated by means of mean-field theory. In particular, it has
been shown there by comparison with Monte Carlo simulations that a
mean-field theory neglecting time correlations in the local density of
particles provides the exact value of $r$ for all $\alpha>0$ and
$\beta>0$, except for the case $\alpha=\beta<1/2$ where it fails. The
coefficient of $r$ in the case $\alpha=\beta<1/2$, corresponding to
coexistence between a low-density phase and a high-density phase, has
been found in Ref.\ \cite{Messelink} by combining mean-field
estimations with domain-wall theory.

\section{Proof of the Mean Residence Time Law}
\label{sec:proof}

In this section we prove that $\lim_{t\uparrow\infty} T_t=\rho/\phi$
$\prob$-a.s., thus demonstrating the $\rho=\phi\tau$ law.  We first
observe that the denominator of (\ref{MRT}) divided by $t$ tends to
$\phi$ in the large $t$ limit since the number of particles that enter
the system equals the number of particles that enter the system and
eventually leave it.  Formally,
$\lim_{t\uparrow\infty}(1/t)\sum_{i=1}^{N_t}\sum_{j=i+1}^\infty
(U_{i,j-1}-U_{i,j})=\phi$ $\prob$-a.s.\ follows from (\ref{phi_rate})
since $\sum_{j=i+1}^\infty(U_{i,j-1}-U_{i,j})=U_{i,i}$
$\prob$-a.s.\ for each $i\ge 1$ due to the fact that
$\lim_{j\uparrow\infty}U_{i,j}=0$ $\prob$-a.s.. This way, in order to
prove that $\lim_{t\uparrow\infty} T_t=\rho/\phi$ $\prob$-a.s.\ it
suffices to show that
$\lim_{t\uparrow\infty}(1/t)\sum_{i=1}^{N_t}\sum_{j=i+1}^\infty
(J_j-J_i)(U_{i,j-1}-U_{i,j})=\rho$ $\prob$-a.s..  The latter limit is
verified if
\begin{equation}
\lim_{t\uparrow\infty}\,\frac{1}{t}\sum_{i=1}^{N_t}\sum_{j=i}^\infty H_jU_{i,j}=\rho~~~~~~~\prob\mbox{-a.s.}.
\label{secondo}
\end{equation}
Indeed, we have
$\sum_{j=i+1}^\infty(J_j-J_i)(U_{i,j-1}-U_{i,j})=\sum_{j=i}^\infty H_j
U_{i,j}$ $\prob$-a.s.\ for every $i\ge 1$, thanks to the identity
$J_j-J_i=\sum_{k=i}^{j-1} H_k$ and the fact that $U_{i,j}=0$ for all
sufficiently large $j$ $\prob$-a.s., since on the one hand
$\lim_{j\uparrow\infty}U_{i,j}=0$ $\prob$-a.s., and on the other hand
$U_{i,j}$ can take only two values.  We shall therefore concentrate on
proving (\ref{secondo}) starting from the strong law of large numbers
for functionals of the process $\process$ (see \cite{Norris}, page
126), which in particular gives
\begin{equation}
\lim_{t\uparrow\infty}\,\frac{1}{t}\int_0^t |\eta_\tau|d\tau=\sum_{\eta\in
    S}|\eta|\pi(\eta)=\rho~~~~~~~\prob\mbox{-a.s.}.
\label{SLLN}
\end{equation}

To begin with, we notice that the particles still in the system after
the $j$th configuration jump are those that were present at the
beginning or that have been injected up to the $j$th change of
configuration and have not yet left the lattice. The following lemma
concerning the number of particles holds.
\begin{lemma}
\label{idTU}
$|\zeta_j|=\sum_{x\in\Lambda}\Theta_{1,j}(x)+\sum_{i=1}^j U_{i,j}$ for each $j\ge 1$.
\end{lemma}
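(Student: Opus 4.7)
The plan is to prove the identity by induction on $j\ge 1$, leaning on a combinatorial reading of the right-hand side. Since each configuration jump falls into exactly one of the alternatives (I), (D), (E) and involves at most one moving particle, the factors $\theta_k$ encode a unique particle trajectory and the product in (\ref{Thetadef}) selects at most one path: for each fixed $x$, $\Theta_{1,j}(x)\in\{0,1\}$ is the indicator of the event ``a particle originally located at $x$ is still in the lattice after the $j$th configuration jump.'' An analogous reading of (\ref{Udef}) gives $U_{i,j}\in\{0,1\}$: ``a particle was injected during the $i$th configuration jump and is still in the lattice after the $j$th.'' Since every particle counted by $|\zeta_j|$ was either initially present or injected during some unique step $i\le j$, the identity simply expresses the partition of the particles of $\zeta_j$ by birth time.

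For the base case $j=1$ I would unpack the sums directly using the three alternatives for the first configuration jump: the diagonal term $\zeta_0(x)\zeta_1(x)$ collapses to $|\zeta_0|$ minus contributions lost to extraction or to a diffusion move, the off-diagonal branch of $\theta_1$ recovers the moved particle in case (D) and is zero otherwise, and $U_{1,1}=\mathds{1}(|\zeta_1|>|\zeta_0|)$ fires only in case (I). Summing, the right-hand side equals $|\zeta_1|$ in all three cases.

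The inductive step reduces to the following update rule, which I view as the main technical point: every ``label'' alive after the $(j-1)$th jump, meaning either a site $x$ with $\Theta_{1,j-1}(x)=1$ or an index $i<j$ with $U_{i,j-1}=1$, corresponds to the particle sitting at a uniquely determined site $y$ with $\zeta_{j-1}(y)=1$; the same label is still alive at time $j$ iff the $j$th jump does not extract a set containing $y$. Granted this rule and the inductive hypothesis, the right-hand side at time $j$ equals $|\zeta_{j-1}|+\Delta_j$, where case (I) contributes the extra label $U_{j,j}=1$, case (D) only transports an existing label to a new site, and case (E) deletes exactly $|v|$ labels, so $\Delta_j\in\{+1,0,-|v|\}$ matches $|\zeta_j|-|\zeta_{j-1}|$. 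The main obstacle is the careful verification of the update rule from the definition of $\theta_j$: one has to check that for any site $y$ with $\zeta_{j-1}(y)=1$ there is at most one $z\in\Lambda$ with $\theta_j(y,z)=1$, that such a $z$ exists iff $y$ is not among the sites extracted at step $j$, and that in case (I) the $U_{j,j}$ contribution is indeed $1$, a bookkeeping statement that nonetheless requires unpacking each alternative (I), (D), (E) in turn.
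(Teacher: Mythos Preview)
Your forward-induction plan is sound and the combinatorial reading is exactly right, but it differs from the paper's argument in an instructive way. The paper fixes $j$ and runs a \emph{backward} induction on $i$ (from $j$ down to $1$) to prove the more general identity $|\zeta_j|=\sum_{x\in\Lambda}\Theta_{i,j}(x)+\sum_{k=i}^j U_{k,j}$. The engine is the pointwise identity
\[
\zeta_i(y)=\sum_{x\in\Lambda}\theta_i(x,y)+\psi_i(y),\qquad \psi_i(y):=\mathds{1}\big(|\zeta_i|>|\zeta_{i-1}|\big)\,[1-\zeta_{i-1}(y)]\,\zeta_i(y),
\]
which records that a particle at $y$ after step $i$ either came from some $x$ via $\theta_i(x,y)$ or was just injected. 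Combined with the definitional relations $\Theta_{i,j}(x)=\sum_{y}\theta_i(x,y)\,\Theta_{i+1,j}(y)$ and $U_{i,j}=\sum_{y}\psi_i(y)\,\Theta_{i+1,j}(y)$, this telescopes the backward step purely algebraically, with no (I)/(D)/(E) case split inside the induction and no labels-to-sites bijection.

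Your route instead inducts forward on $j$ and analyzes the $j$th jump by cases. One point deserves care: in case~(E) you assert that exactly $|v|$ labels are deleted, but the bare counting hypothesis $|\zeta_{j-1}|=\sum_{x}\Theta_{1,j-1}(x)+\sum_{i<j}U_{i,j-1}$ does not by itself force one label per extracted site. You need to carry the stronger statement that alive labels are in bijection with occupied sites (equivalently, that distinct labels always sit at distinct sites), which is easy to propagate through the same (I)/(D)/(E) analysis and is what makes your partition-by-birth-time picture literally true at every step. With that strengthening your argument is complete; it is arguably more combinatorially transparent, while the paper's backward, algebraic version is slicker and sidesteps the bijection entirely.
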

\begin{proof}
For brevity, set
$\psi_i(x):=\mathds{1}(|\zeta_i|>|\zeta_{i-1}|)[1-\zeta_{i-1}(x)]\zeta_i(x)$
for each $i\ge 1$ and $x\in\Lambda$. For every $i\ge 1$ and
$y\in\Lambda$ we have
\begin{equation}
  \zeta_i(y)=\sum_{x\in\Lambda}\theta_i(x,y)+\psi_i(y).
  \label{eq11}
\end{equation}
This identity can be easily verified considering separately the
alternatives (I), (D), and (E) for the $i$th configuration jump. It
simply states that a particle in the system either was already present
before the last configuration jump or it has been injected during this
change of configuration.  Making use of (\ref{eq11}) we show by
induction that for all integers $j\ge 1$ and $i$ running from $j$ to 1
\begin{equation}
|\zeta_j|=\sum_{x\in\Lambda}\Theta_{i,j}(x)+U_{i,j}+\cdots+U_{j,j}.
\label{eq12}
\end{equation}
The lemma follows from this last formula when $i=1$.  In order to
demonstrate (\ref{eq12}), pick $j\ge 1$ and notice that
$\sum_{y\in\Lambda}\psi_j(y)=U_{j,j}$ since
$\sum_{y\in\Lambda}[1-\zeta_{j-1}(y)]\zeta_j(y)=1$ if
$|\zeta_j|>|\zeta_{j-1}|$, which corresponds to alternative (I).
Then, identity (\ref{eq11}) yields
\begin{equation*}
|\zeta_j|=\sum_{x\in\Lambda}\sum_{y\in\Lambda}\theta_j(x,y)+\sum_{y\in\Lambda}\psi_j(y)=\sum_{x\in\Lambda}\Theta_{j,j}(x)+U_{j,j}.
\end{equation*}
This proves (\ref{eq12}) when $i=j$.  Suppose now that (\ref{eq12})
holds with $i+1\le j$ in the place of $i$. From definition
(\ref{Thetadef}) we have that
$\sum_{y\in\Lambda}\theta_i(x,y)\,\Theta_{i+1,j}(y)=\Theta_{i,j}(x)$
and that $\Theta_{i+1,j}(y)$ is proportional to $\zeta_i(y)$, so that
in particular (\ref{Udef}) can be recast as
$\sum_{y\in\Lambda}\psi_i(y)\,\Theta_{i+1,j}(y)=U_{i,j}$. Then, we get
from the inductive hypothesis first and (\ref{eq11}) later that
\begin{eqnarray}
\nonumber
|\zeta_j|&=&\sum_{y\in\Lambda}\Theta_{i+1,j}(y)+U_{i+1,j}+\cdots+U_{j,j}\\
\nonumber
&=&\sum_{y\in\Lambda}\zeta_i(y)\,\Theta_{i+1,j}(y)+U_{i+1,j}+\cdots+U_{j,j}\\
\nonumber
&=&\sum_{x\in\Lambda}\sum_{y\in\Lambda}\theta_i(x,y)\,\Theta_{i+1,j}(y)+\sum_{y\in\Lambda}\psi_i(y)\,\Theta_{i+1,j}(y)+U_{i+1,j}+\cdots+U_{j,j}\\
\nonumber
&=&\sum_{x\in\Lambda}\Theta_{i,j}(x)+U_{i,j}+\cdots+U_{j,j}.
\end{eqnarray}
This proves (\ref{eq12}) when $i<j$. \qed
\end{proof}

The fact that $\eta_t=\zeta_j$ if $J_j\le t<J_{j+1}=J_j+H_j$ and that
$J_{N_t}\le t<J_{N_t+1}$ allows to show that
\begin{equation*}
  \int_0^t |\eta_\tau|d\tau=\sum_{j=0}^{N_t-1}H_j|\zeta_j|+\big(t-J_{N_t}\big)|\zeta_{N_t}|=
  \sum_{j=0}^{N_t}H_j|\zeta_j|+\big(t-J_{N_t+1}\big)|\zeta_{N_t}|.
\end{equation*}
Using for all $j\ge 1$ the identity
$|\zeta_j|=\sum_{x\in\Lambda}\Theta_{1,j}(x)+\sum_{i=1}^j U_{i,j}$
provided by Lemma \ref{idTU} we then find
\begin{equation*}
  \int_0^t |\eta_\tau|d\tau=H_0|\zeta_0|+\big(t-J_{N_t+1}\big)|\zeta_{N_t}|+\sum_{x\in\Lambda}\sum_{j=1}^{N_t}H_j\Theta_{1,j}(x)+
  \sum_{i=1}^{N_t}\sum_{j=i}^{N_t}H_jU_{i,j}.
\end{equation*}
This way, noticing that $0\le J_{N_t+1}-t\le
J_{N_t+1}-J_{N_t}=H_{N_t}$ we obtain the bound
\begin{eqnarray}
  \nonumber
  \Bigg|\sum_{i=1}^{N_t}\sum_{j=i}^\infty H_jU_{i,j}-\int_0^t |\eta_\tau|d\tau\Bigg|&\le&H_0|\zeta_0|+\big(J_{N_t+1}-t\big)|\zeta_{N_t}|+\\[-1.1em]
  \nonumber
  &&\sum_{x\in\Lambda}\sum_{j=1}^{N_t} H_j\Theta_{1,j}(x)+\sum_{i=1}^{N_t}\sum_{j=N_t+1}^\infty H_jU_{i,j}\\[0.3em]
\nonumber
  &\le&|\Lambda|H_0|+|\Lambda|H_{N_t}+\\
  &&\sum_{x\in\Lambda}\sum_{j=1}^{N_t} H_j\Theta_{1,j}(x)+\sum_{i=1}^{N_t}\sum_{j=N_t+1}^\infty H_jU_{i,j}.
\label{T_3}
\end{eqnarray}
The limit (\ref{secondo}) follows from (\ref{SLLN}) if we prove that
the r.h.s.\ of (\ref{T_3}) divided by $t$ goes to zero
$\prob$-a.s.\ when $t$ is sent to infinity. It is clear that
$\lim_{t\uparrow\infty}(1/t)H_0=0$ $\prob$-a.s.\ since $H_0<\infty$
$\prob$-a.s.. Then, we must show that
$\lim_{t\uparrow\infty}(1/t)V_{N_t}=0$ $\prob$-a.s. \linebreak with $V_n$ once
equal to $H_n$, once equal to $\sum_{x\in\Lambda}\sum_{j=1}^n
H_j\Theta_{1,j}(x)$, and once equal to
$\sum_{i=1}^n\sum_{j=n+1}^\infty H_jU_{i,j}$. The average number of
configuration jumps per unit time is
$\lim_{t\uparrow\infty}(1/t)N_t=\sum_{\eta\in S}q(\eta)\pi(\eta)$
$\prob$-a.s.\ with $q(\eta)$ as in (\ref{qdef}) (see \cite{Serfozo},
page 265). As $\sum_{\eta\in S}q(\eta)\pi(\eta)<\infty$, we obtain
$\lim_{t\uparrow\infty}(1/t)V_{N_t}=0$ $\prob$-a.s.\ if we demonstrate
that $\lim_{n\uparrow\infty}(1/n)V_n=0$ $\prob$-a.s..  The
Borel-Cantelli lemma states that $\lim_{n\uparrow\infty}(1/n)V_n=0$
$\prob$-a.s.\ if $\sum_{n=1}^\infty\prob[V_n>\epsilon n]<\infty$ for
all $\epsilon>0$ and the Markov's inequality yields
$\prob[V_n>\epsilon n]\le(1/\epsilon n)^2\,\Ex[V_n^2]$ for every $n\ge
1$ and $\epsilon>0$. This way, we conclude that
$\lim_{t\uparrow\infty}(1/t)V_{N_t}=0$ $\prob$-a.s.\ if there exists a
positive constant $C<\infty$ such that $\Ex[V_n^2]\le C$ for all $n\ge
1$.  Let us show that such a constant exists.  We recall that
$H_0,\ldots,H_i$ are independent exponential random variables of
parameters $q(\zeta_0),\ldots,q(\zeta_i)$ conditional on
$\zeta_0,\ldots,\zeta_i$. We set $\delta:=\min_{\eta\in S}\{q(\eta)\}$
and we observe that $\delta>0$ since $q(\eta)>0$ for all $\eta$
belonging to the finite set $S$.

We have that for all $n\ge 1$
\begin{eqnarray}
  \nonumber
  \Ex\big[H_n^2\big]&=&\sum_{\eta_0\in S}\cdots\sum_{\eta_n\in S}\Ex\big[H_n^2\big|\zeta_0=\eta_0\land\ldots\land\zeta_n=\eta_n\big]\,
  \prob\big[\zeta_0=\eta_0\land\ldots\land\zeta_n=\eta_n\big]\\
  \nonumber
  &=&\sum_{\eta_0\in S}\cdots\sum_{\eta_n\in S}\frac{2}{q^2(\eta_n)}\,\prob\big[\zeta_0=\eta_0\land\ldots\land\zeta_n=\eta_n\big]
  \le\frac{2}{\delta^2}.
\end{eqnarray}
Thus, there exists $C<\infty$ such that $\Ex[V_n^2]\le C$ for each
$n\ge 1$ when $V_n:=H_n$. The cases
$V_n:=\sum_{x\in\Lambda}\sum_{j=1}^n H_j\Theta_{1,j}(x)$ and
$V_n:=\sum_{i=1}^n\sum_{j=n+1}^\infty H_jU_{i,j}$ are more involved
and require the use of the following lemma.
\begin{lemma}
\label{U_lem}
There exist positive constants $c<\infty$ and $r<1$ with the property
that $\Ex[\Theta_{i,j}(x)]\le c\,r^{j-i}$ for all $j\ge i\ge 1$ and
$x\in\Lambda$.
\end{lemma}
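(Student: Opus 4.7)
My approach will be to exhibit a uniform hitting bound: from any configuration, the jump chain will reach the empty state within a fixed number of steps with fixed positive probability, and reaching the empty state forces extraction of whatever particle is being tracked. Iterating this with the Markov property will then give the claimed geometric decay.

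\textbf{Step 1 (sufficient event for extinction of the tracked particle).} Let $\mathbf{0}\in S$ denote the empty configuration. If $\zeta_k=\mathbf{0}$ for some $k\in\{i,\ldots,j\}$, then every particle present before the $i$th configuration jump, including the one tracked by $\Theta_{i,j}(x)$, has necessarily been extracted by step $k$. Therefore
\[
\Theta_{i,j}(x)\le\mathds{1}\bigl(\zeta_k\ne\mathbf{0}\ \text{for all}\ k\in\{i,\ldots,j\}\bigr),
\]
and it is enough to bound the probability of the right-hand event.

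\textbf{Step 2 (uniform reach-the-empty-state estimate).} By irreducibility of the jump chain, for every $\eta\in S$ there is an integer $n_\eta\ge 1$ such that $\prob[\zeta_{n_\eta}=\mathbf{0}\mid\zeta_0=\eta]>0$. Since $S$ is finite, I can set $N:=\max_{\eta\in S}n_\eta<\infty$ and
\[
\epsilon:=\min_{\eta\in S}\prob\bigl[\zeta_k=\mathbf{0}\ \text{for some}\ k\in\{1,\ldots,N\}\,\bigl|\,\zeta_0=\eta\bigr],
\]
which is strictly positive as a minimum of finitely many strictly positive quantities.

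\textbf{Step 3 (iteration and conclusion).} Partitioning $\{i,\ldots,j\}$ into consecutive blocks of length $N$ and applying the Markov property of the jump chain at the endpoints of these blocks, together with the bound from Step 2, I get
\[
\prob\bigl[\zeta_k\ne\mathbf{0}\ \text{for all}\ k\in\{i,\ldots,i+mN-1\}\bigr]\le(1-\epsilon)^m
\]
for every integer $m\ge 0$. Choosing $m:=\lfloor(j-i+1)/N\rfloor\ge(j-i+1)/N-1$ and combining with Step 1 then yields $\Ex[\Theta_{i,j}(x)]\le c\,r^{j-i}$ with $r:=(1-\epsilon)^{1/N}<1$ and a constant $c<\infty$ depending only on $\epsilon$ and $N$, thus independent of $i$, $j$, and $x$.

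\textbf{Main obstacle.} The only nontrivial input is the uniform lower bound $\epsilon>0$ in Step 2; it rests on combining irreducibility (each state communicates with the empty configuration) with the finiteness of $S$ (so that the minimum of the finitely many hitting probabilities remains strictly positive). Everything else is standard Markov-chain bookkeeping.
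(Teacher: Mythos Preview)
Your proof is correct, but it follows a different and more elementary route than the paper.

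The paper proceeds via spectral theory: it introduces a transfer operator $W$ on the space of functions $f:\Lambda\times S\to\mathbb{C}$ such that $\Ex[\Theta_{i,j}(x)\mid\zeta_{i-1}=\eta]=W^{j-i+1}\mathbf{1}(x,\eta)$, shows that every eigenvalue of $W$ has modulus strictly less than $1$ (using, ultimately, the same recurrence fact that the chain visits the empty configuration), and then invokes Gelfand's spectral radius formula to obtain $\|W^{n+1}\|\le c\,r^n$. Your argument bypasses all of this: you observe directly that $\Theta_{i,j}(x)$ vanishes as soon as the jump chain visits $\mathbf{0}$, extract a uniform one-block hitting probability $\epsilon>0$ from irreducibility and finiteness of $S$, and iterate with the Markov property. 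Both arguments rest on the same underlying mechanism (recurrence to the empty state kills the tracked particle), but yours packages it as a standard geometric-trials bound rather than a spectral statement. Your approach is shorter and avoids operator theory entirely; the paper's approach, on the other hand, identifies the sharp exponential rate as the spectral radius of $W$, which could be useful if one wanted quantitative control or to analyze related quantities governed by the same operator.
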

\begin{proof}
Let $E$ be the vector space of the functions $f:\Lambda\times
S\to\mathbb{C}$ endowed with the norm
$\|f\|:=\max_{(x,\eta)\in\Lambda\times S}\{|f(x,\eta)|\}$ and let
$W:E\to E$ be the linear operator defined for each $f\in E$ by
\begin{eqnarray}
\nonumber
Wf(x,\eta)&:=&\sum_{\eta'\in S}\eta(x)\eta'(x)\prob\big[\zeta_1=\eta'\big|\zeta_0=\eta\big]f(x,\eta')+\\
\nonumber
&&\sum_{y\in\Lambda}\sum_{\eta'\in S}\eta(x)\big[1-\eta(y)\big]\big[1-\eta'(x)\big]\eta'(y)
\prob\big[\zeta_1=\eta'\big|\zeta_0=\eta\big]f(y,\eta').
\end{eqnarray}
Denoting by $\|W\|:=\sup_{f\in E\,:\,\|f\|=1}\{\|Wf\|\}$ the norm of
$W$ induced by the vector norm on $E$ and by $\sigma(W)$ the spectrum
of $W$, Gelfand's formula for the spectral radius states that
$\lim_{n\uparrow\infty}\|W^n\|^{1/n}=\max_{\xi\in\sigma(W)}\{|\xi|\}$.
The powers
of $W$ are related to certain expected values, as we shall see in a
moment.

Pick a function $f\in E$ and consider the random variable
$\Theta^f_{i,j}(x_{i-1})$ defined for each couple of integers $j\ge
i\ge 1$ and site $x_{i-1}\in\Lambda$ by
\begin{equation*}
\Theta^f_{i,j}(x_{i-1}):=\sum_{x_i\in\Lambda}\cdots\sum_{x_j\in\Lambda}\prod_{k=i}^j\theta_k(x_{k-1},x_k)f(x_j,\zeta_j).
\end{equation*}
The variable $\Theta^f_{i,j}(x_{i-1})$ reduces to
$\Theta_{i,j}(x_{i-1})$ given by (\ref{Thetadef}) when $f$ is
identically equal to one. The Markov property of $\jump$ in
combination with the fact that $\theta_k(x,y)$ is a deterministic
function of only $\zeta_{k-1}$ and $\zeta_k$ yields for each
$x_{i-1}\in\Lambda$ and $\eta_{i-1}\in S$ the relationship
\begin{eqnarray}
  \nonumber
  &&\Ex\big[\Theta^f_{i,j}(x_{i-1})\big|\zeta_{i-1}=\eta_{i-1}\big]=\\
\nonumber
  &=&\sum_{x_i\in\Lambda}\cdots\sum_{x_j\in\Lambda}\sum_{\eta_i\in S}\cdots\sum_{\eta_j\in S}
  \prod_{k=i}^j\Ex\Big[\theta_k(x_{k-1},x_k)\mathds{1}(\zeta_k=\eta_k)\Big|\zeta_{k-1}=\eta_{k-1}\Big]f(x_j,\eta_j).
\end{eqnarray}
It follows from here that
$\Ex[\Theta^f_{i,j}(x_{i-1})|\zeta_{i-1}=\eta_{i-1}]=W^{j-i+1}f(x_{i-1},\eta_{i-1})$,
with $W$ the above linear operator, since the homogeneity of the jump
chain implies that for each $k\ge 1$ and function $f$
\begin{eqnarray}
\nonumber
&&\sum_{y\in\Lambda}\sum_{\eta'\in S}\Ex\Big[\theta_k(x,y)\mathds{1}(\zeta_k=\eta')\Big|\zeta_{k-1}=\eta\Big]f(y,\eta')=\\
\nonumber
&=&\sum_{\eta'\in S}\eta(x)\eta'(x)\prob\big[\zeta_k=\eta'\big|\zeta_{k-1}=\eta\big]f(x,\eta')+\\
\nonumber
&&\sum_{y\in\Lambda}\sum_{\eta'\in S}\eta(x)\big[1-\eta(y)\big]\big[1-\eta'(x)\big]\eta'(y)
\prob\big[\zeta_k=\eta'\big|\zeta_{k-1}=\eta\big]f(y,\eta')\\
\nonumber
&=&Wf(x,\eta).
\end{eqnarray}
In conclusion, for every $j\ge i\ge 1$ and $x\in\Lambda$ we find
\begin{equation}
\Ex\big[\Theta^f_{i,j}(x)\big]=\sum_{\eta\in S}\prob\big[\zeta_{i-1}=\eta\big]\,W^{j-i+1}f(x,\eta).
\label{ExThetaf}
\end{equation}

We now prove that $|\xi|<1$ for each eigenvalue $\xi\in\sigma(W)$. To
this aim, we recall that there exist $\prob$-a.s.\ infinitely many $k$
such that $\theta_k(x,y)=0$ for all $x$ and $y$ due to recurrence of
$\jump$. Consequently, $\lim_{j\uparrow\infty}\Theta^f_{i,j}(x)=0$
$\prob$-a.s.. In its turn, Lebesgue's dominated convergence theorem
gives $\lim_{j\uparrow\infty}\Ex[\Theta^f_{i,j}(x)]=0$ since
$|\Theta^f_{i,j}(x)|\le\|f\|$. This is true for every $i\ge 1$,
$x\in\Lambda$, and $f\in E$ irrespective of the distribution of
$\zeta_0=\eta_0$. Pick $\xi\in\sigma(W)$ and let $f\in E$ be a
corresponding eigenfunction, so that $Wf=\xi f$. There exists a pair
$(\bar{x},\bar{\eta})\in\Lambda\times S$ such that
$f(\bar{x},\bar{\eta})\ne 0$ and we can assume without loss of
generality that $f(\bar{x},\bar{\eta})=1$.  For the process $\process$
defined by the initial condition $\eta_0=\bar{\eta}$ $\prob$-a.s.\ the
application of (\ref{ExThetaf}) with $i=1$ and $f$ the previously
introduced eigenfunction yields
$\Ex\big[\Theta^f_{1,j}(\bar{x})\big]=\xi^j$.  This way, the bound
$|\xi|<1$ follows from
$\lim_{j\uparrow\infty}\Ex[\Theta^f_{1,j}(\bar{x})]=0$.

The fact that $\max_{\xi\in\sigma(W)}\{|\xi|\}<1$ in combination with
Gelfand's spectral radius formula proves that there exist positive
constants $c<\infty$ and $r<1$ such that $\|W^{n+1}\|\le c\,r^n$ for
all $n\ge 0$. Expression (\ref{ExThetaf}) with $f$ identically equal
to one, which has norm $\|f\|=1$, shows that for all $j\ge i\ge 1$ and
$x\in\Lambda$
\begin{equation*}
\Ex\big[\Theta_{i,j}(x)\big]\le\sum_{\eta\in S}\prob\big[\zeta_{i-1}=\eta\big]\big\|W^{j-i+1}\big\|\le c\,r^{j-i}
\end{equation*}
This concludes the proof. \qed
\end{proof}

Let us now set $V_n:=\sum_{x\in\Lambda}\sum_{j=1}^n
H_j\Theta_{1,j}(x)$ for all $n\ge 1$. Since for any couple of integers
$i\ge 1$ and $j\ge 1$ the binary random variables $\Theta_{1,i}(x)$
and $\Theta_{1,j}(y)$ are deterministic functions of
$\zeta_0,\ldots,\zeta_k$ with $k:=\max\{i,j\}$, we have the bound
\begin{eqnarray}
  \nonumber
  &&\Ex\Big[H_iH_j\Theta_{1,i}(x)\Theta_{1,j}(y)\Big]=\\
\nonumber
  &=&\sum_{\eta_0\in S}\cdots\sum_{\eta_k\in S}\Ex\bigg[H_iH_j\Theta_{1,i}(x)\Theta_{1,j}(y)\prod_{l=0}^k\mathds{1}(\zeta_l=\eta_l)\bigg]\\
  \nonumber
  &=&\sum_{\eta_0\in S}\cdots\sum_{\eta_k\in S}
  \Ex\Big[H_iH_j\Big|\zeta_0=\eta_0\land\ldots\land\zeta_k=\eta_k\Big]
  \Ex\bigg[\Theta_{1,i}(x)\Theta_{1,j}(y)\prod_{l=0}^k\mathds{1}(\zeta_l=\eta_l)\bigg]\\
  \nonumber
  &\le&\sum_{\eta_0\in S}\cdots\sum_{\eta_k\in S}\frac{2}{q(\eta_i)q(\eta_j)}\,
\Ex\bigg[\Theta_{1,i}(x)\Theta_{1,j}(y)\prod_{l=0}^k\mathds{1}(\zeta_l=\eta_l)\bigg]\\
  &\le&\frac{2}{\delta^2}\,\Ex\big[\Theta_{1,i}(x)\Theta_{1,j}(y)\big]
  \le\frac{2}{\delta^2}\sqrt{\Ex\big[\Theta_{1,i}(x)\big]\Ex\big[\Theta_{1,j}(y)\big]},
  \label{T_4}
\end{eqnarray}
where the Cauchy-Schwarz inequality has been exploited to obtain the
last inequality. Thus, combining (\ref{T_4}) with lemma \ref{U_lem} we
arrive at the result
\begin{eqnarray}
  \nonumber
  \Ex\Bigg[\bigg(\sum_{x\in\Lambda}\sum_{j=1}^n H_j\Theta_{1,j}(x)\bigg)^2\Bigg]&=&
  \sum_{x\in\Lambda}\sum_{i=1}^n\sum_{y\in\Lambda}\sum_{j=1}^n\Ex\Big[H_iH_j\Theta_{1,i}(x)\Theta_{1,j}(y)\Big]\\
  \nonumber
  &\le&\frac{2}{\delta^2}\Bigg(\sum_{x\in\Lambda}\sum_{j=1}^n\sqrt{\Ex\big[\Theta_{1,j}(x)\big]}\Bigg)^2
  \le\frac{2c|\Lambda|^2}{\delta^2(1-\sqrt{r})^2}.
\end{eqnarray}
This proves that there exists $C<\infty$ such that $\Ex[V_n^2]\le C$
for each $n\ge 1$.

To conclude, set $V_n:=\sum_{i=1}^n\sum_{j=n+1}^\infty H_jU_{i,j}$ for
all $n\ge 1$. The same arguments that have led to (\ref{T_4}) show
that $\delta^2\Ex[H_jH_kU_{i,j}U_{h,k}]\le 2$ for all $j\ge i\ge 1$
and $k\ge h\ge 1$ since the binary random variables $U_{i,j}$ and
$U_{h,k}$ are deterministic functions of
$\zeta_{\min\{i,h\}-1},\ldots,\zeta_{\max\{j,k\}}$. Consequently, we
can write
\begin{eqnarray}
  \nonumber
  \Ex\Bigg[\bigg(\sum_{i=1}^n\sum_{j=n+1}^\infty H_jU_{i,j}\bigg)^2\Bigg]&=&
  \sum_{i=1}^n\sum_{j=n+1}^\infty\sum_{h=1}^n\sum_{k=n+1}^\infty\Ex\Big[H_jH_kU_{i,j}U_{h,k}\Big]\\
  \nonumber
  &\le&\frac{2}{\delta^2}\sum_{i=1}^n\sum_{j=n+1}^\infty\sum_{h=1}^n\sum_{k=n+1}^\infty\Ex\big[U_{i,j}U_{h,k}\big]\\
  \nonumber
  &\le&\frac{2}{\delta^2}\sum_{i=1}^n\sum_{j=n+1}^\infty\sum_{h=1}^n\sum_{k=n+1}^\infty\sqrt{\Ex[U_{i,j}]\,\Ex[U_{h,k}]}\\
  &=&\frac{2}{\delta^2}\Bigg(\sum_{i=1}^n\sum_{j=n+1}^\infty\sqrt{\Ex[U_{i,j}]}\Bigg)^2.
\label{T_5}
\end{eqnarray}
On the other hand, from (\ref{Udef}) we have
$U_{i,j}\le\sum_{x\in\Lambda}\Theta_{i+1,j}(x)$ for all $j>i$, giving
$\Ex[U_{i,j}]\le|\Lambda|c\,r^{j-i-1}$ thanks to lemma \ref{U_lem}.
Combining this bound with (\ref{T_5}) we get
\begin{equation*}
  \Ex\Bigg[\bigg(\sum_{i=1}^n\sum_{j=n+1}^\infty H_jU_{i,j}\bigg)^2\Bigg]\le\frac{2c|\Lambda|}{\delta^2(1-\sqrt{r})^4}.
\end{equation*}
Thus, there exists $C<\infty$ with the property that $\Ex[V_n^2]\le C$
for any $n\ge 1$.

\end{document}